\let\arxivprimitiveyear\year
\documentclass{ieeeaccess}
\let\accessyearmacro\year
\let\year\arxivprimitiveyear

\definecolor{accessblue}{cmyk}{1,0.27,0,0.06}

\usepackage[T1]{fontenc}
\usepackage[utf8]{inputenc}
\usepackage[english]{babel}

\usepackage{amsmath,amssymb,amsthm}
\usepackage{mathtools}
\usepackage{graphicx}
\usepackage{booktabs}
\usepackage{array}
\usepackage{multirow}
\usepackage{xcolor}
\usepackage{pifont}
\usepackage{tcolorbox}
\tcbuselibrary{listings,breakable,skins}
\usepackage{listings}
\usepackage{tikz}
\usetikzlibrary{arrows.meta,positioning,fit,shapes.multipart,calc}
\usepackage[hidelinks,
            pdftitle={Lineage-Aware Memory Governance: A Derivation-Gated
                      Framework for Privacy-Preserving Column-Level Access
                      Control in Enterprise AI Agents},
            pdfauthor={Venkata Sangaraju; Sudhir Vissa},
            pdfkeywords={lineage-aware memory governance,enterprise AI agents,
                         privacy-preserving retrieval,column-level access control,
                         multi-agent systems,data provenance,metric governance}]{hyperref}
\usepackage{cleveref}
\usepackage{xurl}
\usepackage{enumitem}
\usepackage{caption}
\usepackage{balance}
\usepackage{placeins}

\newtheorem{theorem}{Theorem}
\newtheorem{corollary}[theorem]{Corollary}
\newtheorem{definition}{Definition}
\newtheorem{assumption}{Assumption}
\newtheorem*{remark}{Remark}

\definecolor{algbg}{HTML}{F8FAFC}
\definecolor{algframe}{HTML}{1E3A5F}
\definecolor{algkw}{HTML}{1D4ED8}
\definecolor{algcmt}{HTML}{6B7280}
\definecolor{algnum}{HTML}{374151}

\lstdefinelanguage{pseudocode}{
  morekeywords={for,if,else,return,true,false,and,or,not,in,to,do,
                begin,end,each,all,break,continue,of,with,not\_subset},
  keywordstyle=\bfseries\color{algkw},
  morecomment=[l]{\%\ },
  commentstyle=\itshape\color{algcmt},
  basicstyle=\ttfamily\tiny,
  lineskip=-1.4pt,
  aboveskip=0pt,
  belowskip=0pt,
  numbers=left,
  numberstyle=\tiny\color{algnum},
  numbersep=6pt,
  xleftmargin=14pt,
  framexleftmargin=14pt,
  breaklines=true,
  showstringspaces=false,
  tabsize=2,
}

\newtcblisting[auto counter]{algorithmbox}[3][]{%
  title={\textbf{Algorithm~\thetcbcounter}: #2\hfill\textit{\small#3}},
  colback=algbg, colframe=algframe, fonttitle=\scriptsize\bfseries,
  listing only, listing options={language=pseudocode},
  boxsep=1pt, top=1pt, bottom=1pt, left=1pt, right=1pt,
  before skip=3pt, after skip=3pt,
  breakable, enhanced, #1
}

\definecolor{myblue}{HTML}{1D4ED8}

\let\year\accessyearmacro

\begin{document}

\title{Lineage-Aware Memory Governance: A Derivation-Gated Framework for
Privacy-Preserving Column-Level Access Control in Enterprise AI Agents}

\author{\authorrefmark{1}Venkata Sangaraju and \authorrefmark{2}Sudhir Vissa}

\address[1]{Independent Researcher (ORCID: 0009-0001-7716-1342)}
\address[2]{SAGE7 AI, Georgetown, Texas, USA (ORCID: 0009-0003-7865-0863)}

\corresp{Corresponding author: Venkata Sangaraju.}

\tfootnote{This research received no specific grant from any funding
agency in the public, commercial, or not-for-profit sectors.}

\history{Accepted for publication in \textit{IEEE Access}. This is the
author's accepted manuscript; the final published version is available at
\url{https://doi.org/10.1109/ACCESS.2026.3730363}.}

\doi{Digital Object Identifier 10.1109/ACCESS.2026.3730363}

\markboth{Sangaraju \MakeLowercase{\textit{et al.}}: Lineage-Aware Memory
Governance for Enterprise AI Agents}{Sangaraju \MakeLowercase{\textit{et al.}}:
Lineage-Aware Memory Governance for Enterprise AI Agents}

\begin{abstract}
Enterprise AI agents that share a memory store face two unaddressed
risks: sensitive data can leak through legitimately computed results the
requester could not derive, and departments can silently compute a
same-named key performance indicator (KPI) through conflicting logic.
Existing agent-memory systems (e.g., MemGPT, Zep, A-MEM) gate retrieval
by content, ownership, and role, not derivation, missing a cached insight
that embeds a forbidden column. We introduce the Analytical Memory Unit
(AMU), a memory schema that attaches a full derivation (lineage) graph to
every cached result, gated by a retrieval policy that serves a hit only
when the requester is authorised for every column touched. Provided
lineage recording is complete, we prove by construction that the policy
blocks retrieval of results derived from a sensitive column outside the
requester's permissions, at $O(n)$ worst case -- a conditional design
guarantee, not an empirical claim, that excludes derived features
encoding sensitive information without naming their source. Eliminating
measured leakage required $75$--$90\%$ recorded lineage completeness, so
we treat $90\%$ as a conservative deployment target. Across six
experiments, lineage-gated retrieval removes the $18.8$--$25.5\%$
cross-department leakage naive content-gated memory suffers, keeping
$81.5$--$82.6\%$ of memory reuse at $13.8\,\mu$s worst-case overhead. A
real-agent proof-of-concept with LLM-generated SQL is consistent with the
guarantee: zero leaks over 9 round-trips, two conflicts caught
automatically -- though a feasibility demonstration, not evidence of
production viability. This offers a practical governance layer for
shared agent memory, complementing source-layer access control and
supporting EU AI Act compliance.
\end{abstract}

\begin{IEEEkeywords}
Column-level access control, data provenance, enterprise AI agents,
lineage-aware memory governance, metric-definition governance,
multi-agent systems, privacy-preserving retrieval.
\end{IEEEkeywords}

\maketitle

\section{Introduction}
\label{sec:intro}

Consider a mid-sized enterprise running separate Finance and Marketing AI
agents against a shared customer data warehouse.
Finance joins \texttt{customer\_pii} (which contains \texttt{income}) against
\texttt{transactions} to identify high-value customers, producing a segment
of 12,400 customer IDs and a count.
The result itself carries no raw personally identifiable information (PII)
and no sensitivity tag. So under any memory system that governs access by
\emph{content}, it gets freely cached and served to the next agent that
asks for ``high-value customers'' -- including Marketing, whose role was
never granted access to \texttt{income}. The derivation path quietly
violated column-level access policy, and none of the content-gated
safeguards common in current memory-sharing systems (\S\ref{sec:related})
would catch it, because the violation is invisible in the stored value
and shows up only in \emph{how the value was computed}. This is simply
how the content-gated memory-sharing pattern behaves in every system we
survey in \S\ref{sec:related}.

\subsection{Motivation and Prior Gap}

The urgency tracks the pace of agent adoption: Gartner data shows 80\% of
enterprise applications shipped or updated in Q1 2026 embed at least one
AI agent, up from 33\% in 2024, and 31\% of enterprises already run at
least one agent in production, rising to 47\% in banking and
insurance~\cite{digitalapplied2026adoption}.
IBM's 2025 Cost of a Data Breach Report puts the global average breach cost
at \$4.44 million (\$10.22 million in the US)~\cite{ibm2025breach}, and El
Yagoubi et al.'s AgentLeak benchmark finds that inter-agent messages leak
sensitive data at $68.8\%$, compared with $27.2\%$ for final outputs alone,
meaning output-only auditors---who inspect only final
outputs---miss $41.7\%$ of privacy
violations~\cite{elyagoubi2026agentleak}.
Systems such as MemGPT~\cite{packer2023memgpt}, Zep~\cite{rasmussen2025zep},
A-MEM~\cite{xu2025amem}, Governed Memory~\cite{taheri2026governed}, SSGM
\cite{lam2026ssgm}, and Oracle AI Agent Memory~\cite{oracle2026memory}
show that shared, governed memory cuts down redundant computation. The
problem is that \textbf{every current system controls access at the level
of \emph{content} and \emph{tags}, not \emph{derivation}}
(\S\ref{sec:related}, Table~\ref{tab:comparison}): a memory entry gets labelled with an
owner, sensitivity class, or role, and retrieval is gated on whether the
requester holds the matching label. None of these systems, and none of
the source-layer controls such as Apache Ranger or Databricks Unity
Catalog, record which tables were joined, which columns were touched, or
what filter logic produced a stored result. Source-layer controls stop an
unauthorised agent from \emph{querying} a restricted column, but they do
nothing once a legitimately computed result is cached and later served to
a different, less-privileged requester.
There is a second, independent failure mode on top of this: departments
often compute the same-named key performance indicator (KPI) through
different business logic (e.g., a 90-day vs.\ 60-day churn window), and
without derivation tracking, shared memory quietly propagates one team's
definition to the other. This is documented in the business intelligence
(BI) literature as semantic drift~\cite{colrows2024metrics}, and it is apparently serious
enough that Snowflake, Salesforce, dbt Labs, BlackRock, and RelationalAI
jointly launched the Open Semantic Interchange initiative in
2025~\cite{snowflake2025osi} -- though that effort targets human-facing BI
tools, not autonomous agents writing their own queries.
There is also a regulatory angle worth noting: the EU AI Act imposes
explainability and risk-management obligations on high-risk AI systems for
which data lineage is a structural prerequisite~\cite{euaiact2024}. A
memory layer that can actually \emph{prove} rather than just assert that
no result crossed a column-permission boundary turns what would otherwise
be an audit question into something the system itself can check
mechanically.

\subsection{Contributions}

We make four contributions in this paper. First is the \emph{Analytical
Memory Unit} (AMU) schema and the \emph{lineage-gated retrieval} policy
that goes with it: a memory schema that attaches a full derivation graph
to every stored result, and a retrieval algorithm that gates memory hits
on column-level derivation permissions rather than content tags
(\S\ref{sec:design}). Second, we give a formal safety guarantee --
Theorem~\ref{thm:safety} proves that lineage-gated retrieval prevents
leakage of any sensitive column outside the requester's permissions under
complete lineage recording, with corollaries covering unsafe store
contents and policy updates, plus $O(n)$/$O(1)$ complexity analysis
(\S\ref{sec:formal}). Third is an eight-threat model (T1--T8) that lays out
what the mechanism protects against, what it does not, and what production
control we recommend for each gap (\S\ref{sec:threat}). Fourth, we run six
statistically validated experiments: two schema configurations, a
completeness-degradation study, an extended 43-pair fuzzy
conflict-detection study, microsecond runtime benchmarks, and a real-agent
integration with automatic SQL lineage extraction. Together these quantify
the governance-efficiency tradeoff and ground the mechanism in a real,
open-source agent stack (\S\ref{sec:experiments}).
Two boundaries on this protection are worth stating here rather than only
in \S\ref{sec:discussion}. The safety guarantee holds only to the extent
lineage recording is complete (Assumption~\ref{ass:complete}); Experiment~3
shows measured leakage eliminated only once recorded completeness reaches
$75$--$90\%$ depending on schema, and our own automatic extractor has
known blind spots for views, stored procedures, and deep aliasing. And it
does not cover threat T4 (\S\ref{sec:threat}), inference through undeclared
derived features -- in our view the most consequential gap for production
use.
We keep the main text within IEEE Access length guidelines by moving full
derivations, line-by-line algorithm walkthroughs, extended related-work
discussion, and complete per-experiment tables to the Supplementary
Material.

\section{Related Work}
\label{sec:related}

\textbf{Agent memory systems.}
Du~\cite{du2026survey} formalises agent memory as a write-manage-read loop.
MemGPT~\cite{packer2023memgpt} pages an LLM between a bounded in-context
tier and an external archival/recall store, moving data across tiers via
explicit LLM-issued function calls -- a capacity-management mechanism
with no notion of what produced a stored value.
Zep~\cite{rasmussen2025zep} instead extracts entities and relations from
agent interactions into a bi-temporal knowledge graph (separate
valid-time and transaction-time edges) and retrieves via graph traversal
and semantic search, so its notion of provenance is \emph{when a fact was
true and when it was recorded}, not which source columns produced it.
A-MEM~\cite{xu2025amem} generates keywords, tags, and contextual
descriptions for each note with an LLM and links related notes
Zettelkasten-style, so its retrieval mechanism operates on note-to-note
semantic similarity rather than on a computation's inputs.
All three are state of the art for conversational or task-level memory,
but their gating and retrieval mechanisms all operate above the level of
individual source columns, so none model the provenance of a computed
analytical result.

\textbf{Governed and enterprise memory.}
Governed Memory~\cite{taheri2026governed} and SSGM~\cite{lam2026ssgm}
introduce role-based access, tiered governance routing, and consistency
verification for evolving memory stores. Oracle AI Agent Memory
\cite{oracle2026memory} provides a commercially deployed governed core, and
AgentGuardian~\cite{abaev2026agentguardian} learns access-control policies
from execution traces. What all four have in common is that they govern
access to a memory entry \emph{as a whole} -- by owner, role, or tool-call
legitimacy -- rather than by the individual columns that contributed to
it. That means none of them can tell that a permitted entry embeds a
forbidden column.

\textbf{Multi-user and multi-principal shared memory.}
Two very recent systems tackle adjacent problems. Collaborative
Memory~\cite{rezazadeh2025collabmem} models multi-user, multi-agent memory
sharing via bipartite user--agent--resource access graphs with per-fragment
provenance (contributing agents, accessed resources, timestamps) for
retrospective permission checks, and it proves adherence to time-varying
read/write policies. Like Governed Memory and SSGM, though, its policies
gate access to a memory \emph{fragment} as an atomic unit by identity and
permission graph rather than by the individual source columns a computed
result embeds, so a fragment it permits can still quietly carry a value
derived from a column outside the requester's grant.
GateMem~\cite{ren2026gatemem} is a benchmark rather than a memory system:
it evaluates existing memory-agent baselines for utility, access control,
and active forgetting across multi-principal shared-memory settings
(medical, office, education, and household domains), and finds that no
evaluated method achieves robust access control without sacrificing
utility. That is a useful independent confirmation of the governance gap
this paper targets, though its leak-target annotations are defined at the
level of \emph{shared facts} rather than \emph{derivation graphs}, so it
does not test column-level lineage gating specifically. Neither system
attaches a derivation graph to a cached analytical result or gates
retrieval on the columns touched to produce it, which is the specific
mechanism AMU contributes.

\textbf{Column-level security and provenance.}
Apache Ranger, Databricks Unity Catalog, and BigQuery enforce column-gated
access at the \emph{query layer}; DePLOI~\cite{subramaniam2024deploi} audits
such policies via natural-language-to-SQL (NL2SQL) translation.
The W3C PROV standard~\cite{moreau2011prov} and Herschel et
al.~\cite{herschel2017survey} model data provenance for extract-transform-load
(ETL) pipelines, and
LINEAGEX~\cite{zhang2025lineagex} extracts column-level SQL lineage with
high accuracy.
None of these fire at the point where a legitimately computed result gets
cached and later served to a different, less-privileged requester. That is
the gap our AMU schema closes, by applying column-level derivation
tracking to \emph{agent memory entries} rather than raw database records.

\textbf{Privacy and security in multi-agent systems.}
AgentLeak~\cite{elyagoubi2026agentleak} shows most multi-agent leakage
occurs through inter-agent channels; OMNI-LEAK~\cite{naik2026omnileak} and
Alizadeh et al.~\cite{alizadeh2025leak} show prompt injection can exfiltrate
observed data; AuthGraph~\cite{wang2026authgraph} defends individual agents
by comparing an execution-derived provenance graph against an authorisation
graph (closest in spirit to our column-level gate, but scoped to
single-agent tool-calling rather than cross-department memory retrieval).
This literature targets communication channels and single-agent
manipulation. Our work targets a complementary vector: a \emph{shared
analytical memory store} that leaks even when every communication link and
agent is individually secure (threat T8, \S\ref{sec:threat}).

\textbf{Metric-definition governance.}
Semantic-layer tools and the Open Semantic Interchange
initiative~\cite{snowflake2025osi} centralise \emph{human}-defined metrics
but provide no mechanism for detecting drift when an autonomous agent
computes a KPI through its own generated query---the gap our
\texttt{definition\_hash} conflict detector (Algorithm~\ref{alg:write})
targets.

Table~\ref{tab:comparison} summarises six governance dimensions across
representative systems: no prior system tracks column-level derivation, and
consequently none can gate retrieval on whether the \emph{requester} is
permitted to see every column that contributed to a cached result.
Extended per-system discussion appears in the Supplementary Material.

\begin{table*}[t]
\centering
\caption{Feature comparison of representative agent-memory and governance
systems against the Analytical Memory Unit (AMU). ``Role ACL'' = role-based
access-control list. \ding{51}~= supported,
\ding{55}~= not supported, \textasciitilde~= partial/indirect support.
Cells reflect each system's \emph{documented design decisions}, not
independently verified performance; see the note on source provenance
following this table.}
\label{tab:comparison}
\footnotesize
\begin{tabular}{@{}p{0.26\linewidth}p{0.11\linewidth}p{0.1\linewidth}p{0.11\linewidth}p{0.1\linewidth}p{0.11\linewidth}@{}}
\toprule
\textbf{System} & \textbf{Content tag} & \textbf{Role ACL} & \textbf{Column lineage} & \textbf{Drift detect} & \textbf{Retrieval gate} \\
\midrule
MemGPT             & \ding{55} & \ding{55} & \ding{55} & \ding{55} & \ding{55} \\
Zep                 & \ding{51} & \ding{55} & \ding{55} & \ding{55} & \ding{55} \\
A-MEM               & \ding{51} & \ding{55} & \ding{55} & \ding{55} & \ding{55} \\
Governed Memory     & \ding{51} & \ding{51} & \ding{55} & \ding{55} & \ding{51} \\
SSGM                & \ding{51} & \textasciitilde & \ding{55} & \textasciitilde & \ding{51} \\
Collaborative Memory & \textasciitilde & \ding{51} & \ding{55} & \ding{55} & \ding{51} \\
Oracle AI Agent Memory & \ding{51} & \ding{51} & \ding{55} & \ding{55} & \ding{51} \\
AgentGuardian       & \ding{55} & \ding{51} & \ding{55} & \ding{55} & \textasciitilde \\
Source-layer ACL (Ranger/Unity Catalog) & \ding{55} & \ding{51} & \textasciitilde & \ding{55} & \ding{55} \\
\textbf{AMU (this work)} & \ding{51} & \ding{51} & \textbf{\ding{51}} & \textbf{\ding{51}} & \ding{51} \\
\bottomrule
\end{tabular}
\end{table*}

\textbf{A note on source provenance.} The comparison systems in
Table~\ref{tab:comparison} sit at very different stages of peer review, and
that difference matters for how the comparison should be read. A-MEM~\cite{xu2025amem} appeared at NeurIPS 2025.
MemGPT~\cite{packer2023memgpt} is unpublished but widely adopted and
highly cited, and it established the field's dominant memory-tiering
paradigm anyway. Zep~\cite{rasmussen2025zep}, Governed
Memory~\cite{taheri2026governed}, SSGM~\cite{lam2026ssgm}, and
AgentGuardian~\cite{abaev2026agentguardian} are, at the time of writing,
preprints describing production or proposed architectures rather than
independently peer-reviewed evaluations (AgentGuardian, from a
Ben-Gurion University security group, does report its own empirical
evaluation on two real-world agent applications, but that evaluation
hasn't been peer-reviewed either). Oracle AI Agent
Memory~\cite{oracle2026memory} is documented only in a vendor developer
blog, with no publicly available technical paper. Collaborative
Memory~\cite{rezazadeh2025collabmem}, from Accenture's Center for Advanced
AI, and GateMem~\cite{ren2026gatemem} are, like most of this comparison,
unreviewed preprints as of this writing. We include all of them because
Table~\ref{tab:comparison} compares \emph{documented design
decisions} -- what each system's own description says it tracks and gates
on -- rather than reported performance numbers we did not independently
verify. The governance gap we identify (content/tag-level access control
rather than derivation-level) is a structural property of each system's
publicly stated design, and that holds regardless of whether the design
has cleared peer review. We are not making comparative performance claims
from these sources, only feature-support claims, and we want to flag that
distinction explicitly so readers can weigh the comparison accordingly.

\section{System Design}
\label{sec:design}

\subsection{The Analytical Memory Unit}

\begin{definition}[Analytical Memory Unit]
An \emph{Analytical Memory Unit} (AMU) is a tuple
$\langle \mathit{metric}, v, d, \tau, L \rangle$ where:
$\mathit{metric}$ is a metric-name string;
$v \in \mathbb{R}$ is the computed result value;
$d$ is the producing department;
$\tau$ is an integer epoch (supports time-to-live (TTL) eviction); and
$L = (s_1, \ldots, s_k)$ is a \emph{lineage graph}, each step
$s_i = (\mathit{table}_i,\, C_i,\, \varphi)$
where $C_i$ is the set of columns of $\mathit{table}_i$ accessed and
$\varphi$ is a filter-logic string.
\end{definition}

\begin{figure}[t]
\centering
\begin{tikzpicture}[
  font=\fontsize{5.6}{6.4}\selectfont,
  leaf/.style={draw=algframe,thick,fill=algbg,rounded corners=1.5pt,
               align=center,text width=2.05cm,minimum height=0.62cm,
               inner sep=2pt},
  joinbox/.style={draw=algframe,thick,fill=white,rounded corners=1.5pt,
               align=center,text width=3.55cm,inner sep=3pt},
  resbox/.style={draw=algframe,line width=0.9pt,fill=blue!6,
               rounded corners=1.5pt,align=center,text width=2.3cm,
               inner sep=3pt},
  tagbox/.style={draw=algframe,thick,dashed,fill=algbg,
               rounded corners=1.5pt,align=left,text width=2.05cm,
               inner sep=3pt},
  arr/.style={-{Latex[length=4.5pt,width=3.5pt]},thick,draw=algframe}
]

\node[leaf] (cust) at (0,0)
  {\texttt{customers}\\\{customer\_id,\\signup\_date\}};
\node[leaf] (tx) at (2.35,0) {\texttt{transactions}};
\node[leaf] (pii) at (4.7,0)
  {\texttt{customer\_pii}\\\{last\_txn\_date,\\\textbf{income}\}};

\node[joinbox] (join) at (3.525,1.35)
  {\texttt{transactions} $\bowtie$ \texttt{customer\_pii}\\[2pt]
   \textit{filter:} last\_txn\_date $<$ now()$-$90d\\ AND income $>$ 20000};

\node[resbox] (churn) at (1.05,2.85)
  {\texttt{churn\_rate} $=$ \textbf{0.114}\\(Finance, epoch~17)};

\node[tagbox] (tags) at (4.15,2.85)
  {$\mathrm{S}(a) = \{$\textbf{income}$\}$\\[2pt]
   $\mathrm{H}(a) =$ \texttt{7a41c9e0}\\\texttt{d3f2}};

\draw[arr] (tx.north) |- ($(tx.north)+(0,0.25)$) -| (join.south -| tx);
\draw[arr] (pii.north) |- ($(pii.north)+(0,0.25)$) -| (join.south -| pii);
\draw[arr] (cust.north) to[out=90,in=200] (churn.west);
\draw[arr] (join.north) to[out=90,in=-20] (churn.east);
\draw[arr,dashed] (churn.east) to (tags.west);
\end{tikzpicture}
\caption{Worked lineage graph: Finance's \texttt{churn\_rate} is derived by
joining three source tables. Because the join touches the sensitive column
\texttt{income}, the derived tag $\mathrm{S}(a)$ makes this single cached
number unsafe to serve to any department lacking \texttt{income}
access---even though the value $0.114$ alone reveals nothing.}
\label{fig:lineage-example}
\end{figure}

Figure~\ref{fig:lineage-example} shows a worked example: the result value
$v$ alone carries no information about how it was produced; the lineage
graph $L$ makes the derivation explicit and machine-checkable at retrieval
time.
Two quantities are derived once at AMU creation and cached:
\begin{align}
  \mathrm{S}(a) &= \bigcup_{i=1}^{k} C_i \;\cap\; \mathcal{S}
  \label{eq:sensitivity} \\[2pt]
  \mathrm{H}(a) &= \mathrm{SHA256}\!\left(
    \mathrm{sort}(\mathit{tables}) \,\|\,
    \mathrm{sort}\!\left(\bigcup_i C_i\right) \,\|\, \varphi
  \right)_{\![0{:}12]}
  \label{eq:defhash}
\end{align}
where $\mathcal{S}$ is the sensitive-column registry and $\mathrm{H}(a)$ is
the \emph{definition hash} fingerprinting \emph{how} the metric was
computed.
Both are sub-microsecond in practice (\S\ref{sec:runtime}).
Table~\ref{tab:amu} lists the full schema.

\begin{table}[t]
\centering
\caption{Analytical Memory Unit (AMU) schema fields.}
\label{tab:amu}
\footnotesize
\begin{tabular}{@{}p{0.29\linewidth}p{0.63\linewidth}@{}}
\toprule
\textbf{Field} & \textbf{Description} \\
\midrule
\texttt{metric\_name}      & Business KPI name (\emph{e.g.}, \texttt{churn\_rate}) \\
\texttt{value}, \texttt{owner\_dept}, \texttt{epoch} & Result, producing department, TTL time index \\
\texttt{lineage}           & Derivation graph: sequence of $(t, C, \varphi)$ steps \\
\texttt{sensitivity\_tags} & $\mathrm{S}(a)$ (Eq.~\ref{eq:sensitivity}), cached \\
\texttt{definition\_hash}  & $\mathrm{H}(a)$ (Eq.~\ref{eq:defhash}), cached \\
\bottomrule
\end{tabular}
\end{table}

\FloatBarrier
\subsection{Lineage-Gated Retrieval and Conflict Detection}
\label{sec:algorithms}

Algorithms~\ref{alg:retrieve} and~\ref{alg:write} give the two core
operations; $\mathrm{P}(d)$ denotes department $d$'s permitted column set.
A full walkthrough of Algorithm~\ref{alg:retrieve} follows below; the
analogous walkthrough of Algorithm~\ref{alg:write} appears in the
Supplementary Material.

\begin{algorithmbox}[label={alg:retrieve}]{Lineage-Gated Retrieve}{%
  \textnormal{Input: \textit{store, metric, dept, fresh}; Output: \textit{result}; \textit{leaked}=\textbf{False}}}
candidates <- store[metric]         % all stored AMUs for this metric
permitted  <- P(dept)               % column set the department may see
blocked    <- False

for each a in REVERSE(candidates):  % most-recent-first
  if S(a) not_subset permitted:     % derivation touches a forbidden column
    blocked <- True
    continue                        % skip this candidate; try the next
  end if
  return (reused=True, value=a.value, leaked=False, blocked=False)
end for

store[metric].append(fresh)
return (reused=False, value=fresh.value, leaked=False, blocked=blocked)
\end{algorithmbox}

\begin{algorithmbox}[label={alg:write}]{Lineage-Aware Write}{%
  \textnormal{Input: \textit{store}, AMU \textit{a}; Output: \textit{conflict} \textbf{bool}}}
existing <- store[a.metric]   % all AMUs stored for this metric name
conflict <- False

for each e in existing:
  if e.dept != a.dept:        % cross-department pair
    if H(e) != H(a):          % definition hashes differ -> potential conflict
      conflict <- True
      break
    end if
  end if
end for

store[a.metric].append(a)     % always persist; conflict is surfaced, not blocked
return conflict
\end{algorithmbox}

\textbf{Line-by-line walkthrough of Algorithm~\ref{alg:retrieve}.}
Line~1 retrieves every AMU ever stored under this metric name; this list is
the candidate pool. Line~2 resolves the requester's column permission set
once, outside the loop, so it is not recomputed per candidate. Line~5
iterates candidates \emph{most-recent-first}: this is a deliberate policy
choice, not an implementation detail---it means that when a safe candidate
exists, the department receives the freshest safe value rather than an
arbitrary historical one. Line~6 is the gate: a single $O(|\mathcal{S}|)$
set-difference check (cached $\mathrm{S}(a)$ against $\mathrm{P}(d)$). If the
check fails, line~7 records that at least one candidate was blocked (used
only for diagnostics and the experiments in \S\ref{sec:experiments}) and
line~8 continues to the next-oldest candidate---the rejected candidate is
never returned, never partially returned, and never logged with its value.
Line~10 is the only successful-return statement in the loop body: it fires
the first time a candidate's derivation is fully contained in the
requester's permissions. If the loop exhausts all candidates without a safe
hit, line~13 appends the caller-supplied, already-in-scope \textit{fresh}
computation to the store (so future requests may reuse it) and line~14
returns it. Every exit path is covered by exactly one of these two return
statements, which is what the proof of Theorem~\ref{thm:safety} exploits.
\emph{Complexity:} the loop runs at most $n$ times ($n$ = candidates for
this metric), each iteration doing $O(|\mathcal{S}|)$ work, giving
$O(n\cdot|\mathcal{S}|)$ worst case; since $|\mathcal{S}|$ is a small
constant per organisation, we report this as $O(n)$ in
Table~\ref{tab:complexity}. The best case (first candidate safe) is
$O(|\mathcal{S}|) = O(1)$.

Algorithm~\ref{alg:write} compares cached definition hashes $\mathrm{H}(\cdot)$
rather than raw lineage graphs, making conflict detection $O(1)$ per pair; the
write path never blocks on a conflict, only surfaces one for downstream
alerting (threat T2). Its full line-by-line walkthrough, structurally
analogous to the one above, is given in the Supplementary Material.

\begin{figure*}[t]
\centering
\begin{tikzpicture}[
  font=\fontsize{7.2}{8.4}\selectfont,
  every node/.style={align=center},
  agent/.style={draw=algframe,thick,fill=blue!8,rounded corners=2pt,
                text width=3.0cm,minimum height=1.05cm,inner sep=4pt},
  gate/.style={draw=algframe,thick,fill=yellow!12,rounded corners=2pt,
                text width=4.6cm,minimum height=1.7cm,inner sep=4pt},
  store/.style={draw=algframe,thick,fill=green!10,rounded corners=2pt,
                text width=4.3cm,minimum height=1.7cm,inner sep=4pt},
  serve/.style={draw=algframe,thick,fill=green!10,rounded corners=2pt,
                text width=3.0cm,minimum height=1.05cm,inner sep=4pt},
  block/.style={draw=algframe,thick,fill=red!8,rounded corners=2pt,
                text width=3.6cm,minimum height=1.05cm,inner sep=4pt},
  conflict/.style={draw=algframe,thick,fill=violet!10,rounded corners=2pt,
                text width=4.3cm,minimum height=1.05cm,inner sep=4pt},
  lbl/.style={font=\fontsize{6.4}{7.4}\selectfont\itshape,fill=white,
              inner sep=1pt},
  arr/.style={-{Latex[length=5.5pt,width=4pt]},thick,draw=algframe}
]

\node[agent]   (agent)  at (0,3.55)   {\textbf{Requesting Agent}\\[2pt] dept $+$ metric\_name};
\node[gate]    (gate)   at (4.55,3.55) {\textbf{Lineage Gate}\\[2pt]
  1.~Check requester permissions\\
  2.~AMU.lineage.sensitive\_cols $\subseteq$ permitted?\\
  3.~Flag if \texttt{definition\_hash} conflict};
\node[store]   (store)  at (10.6,3.55) {\textbf{AMU Store}\\[2pt]
  metric\_name, value / epoch,\\ owner\_dept, lineage:[steps],\\
  definition\_hash, sensitivity\_tags};

\node[serve]   (serve)  at (1.1,0.35) {\textbf{Serve from Memory}\\[2pt] reuse\_rate $+{=}1$};
\node[block]   (block)  at (4.75,0.35) {\textbf{Block $+$ Fresh Compute}\\[2pt] blocked\_count $+{=}1$};
\node[conflict](conf)   at (10.6,0.35) {\textbf{Conflict Detector}\\[2pt] Alg.~\ref{alg:write}: compares $\mathrm{H}(\cdot)$};

\draw[arr] (agent.east) -- node[lbl,above]{retrieve} (gate.west);
\draw[arr] (store.west) -- node[lbl,above]{candidates} (gate.east);
\draw[arr] (gate.south -| serve) -- node[lbl,left]{safe} (serve.north);
\draw[arr] (gate.south -| block) -- node[lbl,right]{blocked} (block.north);
\draw[arr] (block.east) to[out=20,in=250] node[lbl,right,pos=0.55]{write new AMU} (store.south);
\draw[arr,{Latex[length=5.5pt,width=4pt]}-{Latex[length=5.5pt,width=4pt]}]
  (store.south) -- node[lbl,right]{checks $\mathrm{H}(\cdot)$} (conf.north);
\end{tikzpicture}
\caption{AMU architecture. An agent's executed query is intercepted at the
database cursor and its lineage extracted (left); the AMU is written to the
shared store, where Algorithm~\ref{alg:write} checks $\mathrm{H}(\cdot)$
against cross-department entries. On the read path (right), the lineage gate
checks $\mathrm{S}(a) \subseteq \mathrm{P}(d)$ before serving from memory;
unsafe candidates are blocked and the agent falls back to a fresh in-scope
computation, itself written back through the same path.}
\label{fig:arch}
\end{figure*}

\subsection{Implementation Considerations}
\label{sec:implementation}

Three practical points come up once you actually try to deploy this
(details in the Supplementary Material). The cleanest integration point is
the database driver layer: if you wrap the cursor so every executed SQL
string passes through a lineage extractor (\textsc{SQLGlot}, Experiment~6),
lineage capture needs zero cooperation from the agent. The system also
has to \emph{fail closed} -- if lineage extraction fails, treat the AMU as
maximally sensitive ($\mathrm{S}(a) \leftarrow \mathcal{S}$) rather than
failing open, and if a department's permission set is unavailable, default
retrieval to denial rather than access. Finally, the \texttt{epoch} field
does double duty: it supports ordinary TTL eviction, but it can also drive
\emph{policy-driven} invalidation -- when $\mathcal{S}$ or $\mathrm{P}(d)$
changes, cached $\mathrm{S}(a)$ values should be lazily recomputed on the
next retrieval (Corollary~\ref{cor:policy-update}, threat T3).

\section{Formal Analysis}
\label{sec:formal}

\begin{assumption}[Complete Lineage]
\label{ass:complete}
For every AMU $a$ in the store, $a.\mathit{lineage}$ records every table,
column, and filter predicate of the computation that produced $a.\mathit{value}$.
\end{assumption}

\begin{theorem}[Sensitive-Column Safety]
\label{thm:safety}
Under Assumption~\ref{ass:complete}, Algorithm~\ref{alg:retrieve} returns no
result to department $d$ that was derived from a sensitive column
$c \in \mathcal{S}$ with $c \notin \mathrm{P}(d)$.
\end{theorem}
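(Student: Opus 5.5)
The argument is a case analysis over the two exit paths of Algorithm~\ref{alg:retrieve}, which the walkthrough notes are the only ways to return. First I fix the meaning of ``derived from $c$'': a returned value $v$ is derived from $c$ if $c$ is a column touched by some step of the computation that produced $v$. Under Assumption~\ref{ass:complete}, this holds for an AMU $a$ exactly when $c \in \bigcup_i C_i$ for $a$'s recorded lineage. Combining this with Eq.~\eqref{eq:sensitivity}, a stored AMU $a$ is derived from a sensitive column $c \in \mathcal{S}$ if and only if $c \in \mathrm{S}(a)$. This translation lemma is the only place the assumption is used. It turns the semantic notion of ``derived from'' into a syntactic membership test on the cached tag, and it needs $\mathrm{S}(a)$ to be consistent with the current $a.\mathit{lineage}$ and the current $\mathcal{S}$.

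\emph{Case 1: return inside the loop (line~10).} This line is reached only if the guard on line~6 failed for the current candidate $a$, that is, only if $\mathrm{S}(a) \subseteq \mathrm{P}(d)$. Suppose, for contradiction, that the returned $a.\mathit{value}$ were derived from some $c \in \mathcal{S}$ with $c \notin \mathrm{P}(d)$. The lemma gives $c \in \mathrm{S}(a) \subseteq \mathrm{P}(d)$, which is a contradiction. Candidates rejected at line~6 reach only the \texttt{continue} and never any return, so no blocked value escapes. Loop order (most-recent-first) plays no role in this case.

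\emph{Case 2: fall-through return (lines~13--14).} Here the returned value is \textit{fresh.value}. The step I expect to be the main obstacle is that the algorithm never checks \textit{fresh}: safety rests on the precondition that the caller supplies an ``already-in-scope'' computation, meaning one produced by department $d$ under source-layer column controls, so that every column it touches lies in $\mathrm{P}(d)$. I would make this an explicit hypothesis of the proof, namely that the fresh computation satisfies $\bigcup_i C_i \cap \mathcal{S} \subseteq \mathrm{P}(d)$. An alternative is to note that applying the line~6 check to \textit{fresh} costs only $O(|\mathcal{S}|)$. With either, the lemma closes the case just as in Case~1.

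Finally, I would note two side conditions explicitly. The cached $\mathrm{S}(a)$ must be current with respect to $\mathcal{S}$ and $\mathrm{P}(d)$, which is handled by lazy recomputation (Corollary~\ref{cor:policy-update}). Fail-closed extraction must set $\mathrm{S}(a)=\mathcal{S}$, and this only strengthens the guard. Because both return statements are covered, the theorem follows.
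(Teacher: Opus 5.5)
Your proposal is correct and follows essentially the same route as the paper. Like the paper, you do an exhaustive case analysis over the two return statements: Case~1 is closed by the line-6 guard together with Eq.~\eqref{eq:sensitivity} and Assumption~\ref{ass:complete}, and Case~2 rests on the caller's in-scope precondition on \textit{fresh}, which the paper likewise takes as given rather than checks. One small refinement: your translation lemma only needs, and Assumption~\ref{ass:complete} only supplies, the direction ``derived from $c$ $\Rightarrow$ $c \in \mathrm{S}(a)$'', and the converse fails under fail-closed tagging $\mathrm{S}(a)=\mathcal{S}$, so state it as an implication rather than an equivalence.
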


\begin{remark}
This is a conditional design-correctness result, not an empirical one: it
follows from the structure of Algorithm~\ref{alg:retrieve} together with
Assumption~\ref{ass:complete}, by the case analysis below. Wherever a
$0\%$ leak rate appears later in this paper (\S\ref{sec:experiments}), it
is a consequence of this proof holding in the setting tested, not an
independent empirical discovery -- the empirical content of those
experiments is the naive-baseline leak rate, the reuse-rate cost, and how
performance degrades once Assumption~\ref{ass:complete} is violated
(\S\ref{sec:degradation}).
\end{remark}

\begin{proof}
By inspection, Algorithm~\ref{alg:retrieve} has exactly two statements that
return a value to the caller: line~10 (memory-served) and line~14
(fallback). Every invocation terminates by reaching one of the two, since
the \texttt{for} loop either returns early at line~10 or exhausts
\textit{candidates} and falls through to lines~13--14. We show the returned
AMU's derivation touches no forbidden sensitive column on each path.

\smallskip\noindent\textbf{Case 1: memory-served path (line~10).}
Reaching line~10 for candidate $a$ requires the guard at line~6 to have
evaluated false, i.e.\ $\mathrm{S}(a) \subseteq \mathrm{P}(d)$ holds for that
$a$. Unfolding Equation~\ref{eq:sensitivity}, $\mathrm{S}(a) =
\bigl(\bigcup_{i=1}^{k} C_i\bigr) \cap \mathcal{S}$, so $\mathrm{S}(a)
\subseteq \mathrm{P}(d)$ is equivalent to: for every step $s_i \in
a.\mathit{lineage}$ and every column $c \in C_i \cap \mathcal{S}$, $c \in
\mathrm{P}(d)$. Contrapositively, there is no sensitive column $c \in
\mathcal{S}$ with $c \notin \mathrm{P}(d)$ appearing in any $C_i$---i.e., no
such $c$ appears anywhere in $a$'s derivation. Since
Assumption~\ref{ass:complete} guarantees $a.\mathit{lineage}$ records
\emph{every} column touched by the computation that produced $a.value$,
this is precisely the safety property claimed for the value returned at
line~10.

\smallskip\noindent\textbf{Case 2: fallback path (line~14).}
The input precondition to Algorithm~\ref{alg:retrieve} (stated in the
signature) requires that \textit{fresh} is computed entirely within
$\mathrm{P}(d)$ by the caller, before Algorithm~\ref{alg:retrieve} is
invoked. Under Assumption~\ref{ass:complete}, \textit{fresh}.\textit{lineage}
records every column of that computation, so every column recorded is, by
the precondition, in $\mathrm{P}(d)$; in particular no sensitive column
outside $\mathrm{P}(d)$ is touched.

\smallskip\noindent
Both cases establish that the AMU returned to department $d$ has a
derivation containing no sensitive column outside $\mathrm{P}(d)$, for an
arbitrary invocation with arbitrary \textit{store} contents. Since the two
cases are exhaustive over all return statements, the claim holds for every
invocation of Algorithm~\ref{alg:retrieve}.
\end{proof}

\begin{corollary}[Robustness to unsafe store contents]
\label{cor:unsafe-contents}
The guarantee holds even when \textit{store} contains AMUs whose derivation
touches columns outside $\mathrm{P}(d)$: such candidates fail the gate and
are skipped, regardless of how many exist or where they occur.
\end{corollary}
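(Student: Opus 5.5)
The corollary follows from the proof of Theorem~\ref{thm:safety}, since that proof already covers arbitrary \textit{store} contents. My plan is to make explicit that nothing in the case analysis depends on the makeup of \textit{candidates}. I then add a small loop-invariant argument showing that unsafe entries are never returned, whatever their number or position.

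\textbf{Step 1: the loop invariant.} For Algorithm~\ref{alg:retrieve} I will use the invariant: \emph{at the start of each iteration, no value has been returned to the caller, and every candidate already processed either satisfied $\mathrm{S}(a)\not\subseteq\mathrm{P}(d)$ and was skipped, or the algorithm has already terminated.} Fix any candidate $a$ with a forbidden sensitive column $c\in C_i\cap\mathcal{S}$ and $c\notin\mathrm{P}(d)$. By Eq.~\eqref{eq:sensitivity}, $c\in\mathrm{S}(a)$, so $\mathrm{S}(a)\not\subseteq\mathrm{P}(d)$. The guard at line~6 therefore fires, and control reaches \texttt{continue} before line~10. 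Hence $a.\mathit{value}$ is never returned. The argument uses only $a$'s own cached tag. It does not depend on the index of $a$ in $\mathrm{REVERSE}(\textit{candidates})$ or on how many other unsafe entries precede or follow it, and this gives the ``regardless of how many or where'' clause.

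\textbf{Step 2: the exhaustive cases.} I then split on whether some safe candidate exists. If one does, the first safe candidate in most-recent-first order reaches line~10, and Case~1 of Theorem~\ref{thm:safety} applies to it. If none does, possibly because every candidate is unsafe, the loop exhausts after at most $n$ iterations. Control then falls to line~14, and Case~2 applies through the precondition on \textit{fresh}. Termination holds because the candidate list is finite and is not modified inside the loop; the append happens only at line~13, after the loop.

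\textbf{Main obstacle.} The only subtle point is that the cached $\mathrm{S}(a)$ must agree with the lineage actually recorded. The tag is computed once at creation, so I will invoke Assumption~\ref{ass:complete} together with the fact that $\mathrm{S}(a)$ is derived deterministically from $a.\mathit{lineage}$ and a fixed $\mathcal{S}$. I will state explicitly that the corollary assumes $\mathcal{S}$ and $\mathrm{P}(d)$ are unchanged since caching. Policy changes are deferred to Corollary~\ref{cor:policy-update}, which handles them by lazy recomputation.
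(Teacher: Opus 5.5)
Your proposal is correct and follows the paper's route. The paper gives no separate proof, because its proof of Theorem~\ref{thm:safety} is already stated for arbitrary \textit{store} contents; your Step~1 is the contrapositive of its Case~1 applied candidate by candidate, and your Step~2 repeats its two-return-path case split. One assumption is unnecessary: only $\mathcal{S}$ needs to be held fixed since caching, because $\mathrm{S}(a)$ does not depend on $\mathrm{P}(d)$, which line~2 reads afresh on every invocation.
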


\begin{corollary}[Behaviour under policy updates]
\label{cor:policy-update}
The guarantee holds \emph{per policy version}: if $\mathrm{S}(a)$ is cached
under registry $\mathcal{S}_1$ and the organisation transitions to
$\mathcal{S}_2$ without recomputing existing AMUs, a column newly added to
$\mathcal{S}_2$ may not appear in the stale cached tag. The guarantee is
restored for all subsequent retrievals once $\mathrm{S}(a)$ is recomputed
against $\mathcal{S}_2$ for every AMU predating the change (formalising
threat T3).
\end{corollary}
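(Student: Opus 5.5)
The plan is to treat the corollary as two claims. The first is a \emph{negative} claim: without recomputation, the guarantee of Theorem~\ref{thm:safety} can fail for the new registry. The second is a \emph{positive} claim: once every pre-change AMU has its tag recomputed, the guarantee holds again for all later retrievals. Both reduce to where Equation~\ref{eq:sensitivity} enters the proof of Theorem~\ref{thm:safety}. That proof uses $\mathcal{S}$ only through the cached value $\mathrm{S}(a)$ tested at line~6, and through the fallback precondition on \textit{fresh}.

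For the negative part I will give a one-AMU counterexample. Take a column $c \in \mathcal{S}_2 \setminus \mathcal{S}_1$, an AMU $a$ whose lineage has a step with $c \in C_i$, and a department $d$ with $c \notin \mathrm{P}(d)$ and $(\bigcup_i C_i)\cap\mathcal{S}_1 \subseteq \mathrm{P}(d)$. The cached tag is $\mathrm{S}_1(a) = (\bigcup_i C_i)\cap\mathcal{S}_1$, which omits $c$, so the guard at line~6 evaluates false. Line~10 then returns $a.\mathit{value}$, which is derived from $c \in \mathcal{S}_2 \setminus \mathrm{P}(d)$. This shows the guarantee is indexed by the registry against which the tags were computed, which is what ``per policy version'' means.

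For the positive part I will rerun the case analysis of Theorem~\ref{thm:safety} with $\mathcal{S}$ replaced by $\mathcal{S}_2$. The key observation is that Assumption~\ref{ass:complete} does not depend on $\mathcal{S}$: lineage records \emph{every} column touched, not only sensitive ones. So recomputing $\mathrm{S}(a) \leftarrow (\bigcup_i C_i)\cap\mathcal{S}_2$ from the stored lineage yields exactly the $\mathcal{S}_2$-version of Equation~\ref{eq:sensitivity}. Case~1 then goes through verbatim. Case~2 needs only that \textit{fresh} is computed within the current $\mathrm{P}(d)$, and its tag is computed at creation under $\mathcal{S}_2$. Changes to $\mathrm{P}(d)$ need no separate treatment, because line~2 reads $\mathrm{P}(d)$ afresh on each invocation rather than caching it. Corollary~\ref{cor:unsafe-contents} then carries over unchanged once all tags are current.

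The main obstacle is making ``once $\mathrm{S}(a)$ is recomputed for every AMU predating the change'' precise in the presence of concurrent writes. An AMU written during the transition with a tag computed under $\mathcal{S}_1$ would reintroduce the counterexample. I would handle this by defining the transition point as the moment after which every stored AMU's tag equals $(\bigcup_i C_i)\cap\mathcal{S}_2$. I would then require that AMUs created afterward are tagged under $\mathcal{S}_2$, for example by lazy recomputation at read time keyed on the \texttt{epoch} field, as described in \S\ref{sec:implementation}. Under that convention, the invariant ``all cached tags are current'' holds for every subsequent retrieval, and the theorem applies.
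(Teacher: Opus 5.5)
Your proposal is correct and follows the reasoning the paper relies on. The paper states the corollary without a separate proof, as a consequence of Theorem~\ref{thm:safety}: Case~1 unfolds Equation~\ref{eq:sensitivity} against whatever registry the cached tag was computed under, so a stale tag omits $\mathcal{S}_2\setminus\mathcal{S}_1$, and recomputing the tag from the complete stored lineage restores the case analysis verbatim. Your explicit counterexample and your treatment of AMUs written mid-transition go beyond what the paper spells out and are sound. One refinement: Case~1 only needs the tag of the candidate actually returned at line~10 to contain $(\bigcup_i C_i)\cap\mathcal{S}_2$, so lazy read-time recomputation secures the guarantee for each retrieval without literally making every stored tag current.
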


\subsection{Complexity and Storage}
\label{sec:complexity}

Let $n$ = AMUs stored per metric, $k$ = lineage steps, $c$ = columns per
step.
Table~\ref{tab:complexity} reports asymptotic and measured costs.

\begin{table}[t]
\centering
\caption{Asymptotic complexity and measured latency (means over 2{,}000
iterations; $k{=}2$--$3$, $c{=}4$--$7$). Full runtime sweep in the
Supplementary Material.}
\label{tab:complexity}
\footnotesize
\begin{tabular}{@{}p{0.32\linewidth}p{0.12\linewidth}p{0.13\linewidth}p{0.13\linewidth}@{}}
\toprule
\textbf{Operation} & \textbf{Cplx.} & \textbf{$n{=}1$} & \textbf{$n{=}50$} \\
\midrule
Gate-check predicate        & $O(|\mathcal{S}|)$ & $0.09\,\mu$s & $0.09\,\mu$s \\
Write --- conflict detection & $O(n)$  & $0.28\,\mu$s & $1.63\,\mu$s \\
Retrieve --- worst case      & $O(n)$  & $0.66\,\mu$s & $13.82\,\mu$s \\
Retrieve --- expected case   & $O(1)$  & $0.66\,\mu$s & $0.64\,\mu$s \\
\bottomrule
\end{tabular}
\end{table}

The expected-case retrieve latency is constant regardless of store size;
the worst case (all $n$ candidates blocked) scales linearly, reaching
$13.82\,\mu$s at $n{=}50$, but in a realistic production store $n \leq 5$
(at most one AMU per department variant per metric), so both paths are
effectively $O(1)$ and well below $2\,\mu$s---negligible relative to any
real analytics query (ms--s).
We chose the worst-case $O(n)$ bound deliberately -- it is not something we
overlooked. Algorithm~\ref{alg:retrieve} scans most-recent-first so it
returns the freshest \emph{safe} value rather than an arbitrary one. An
alternative design that indexed AMUs by $(\mathit{metric}, \mathrm{S}(a))$
could reach $O(1)$ worst case, but at the cost of extra index-maintenance
complexity and giving up the recency-preference guarantee. We passed on
that design because $n$ is small by construction: a metric name
accumulates at most one AMU per distinct (department, lineage-variant)
combination, and that held for $n \leq 5$ in every workload we observed
(Experiments~1--2 and~6), where the $O(n)$ worst case costs at most
$1.74\,\mu$s -- immaterial in practice. We still report the asymptotic
bound honestly, though, because it would matter under a poorly bounded TTL
policy, which is exactly why we recommend TTL eviction as a standing
operational control.
Relative to the $O(1)$ dictionary lookup used by all prior
systems~\cite{packer2023memgpt,rasmussen2025zep,xu2025amem,taheri2026governed,lam2026ssgm,oracle2026memory},
lineage-gated retrieval trades at most $13.82\,\mu$s in absolute terms for
the safety guarantee of Theorem~\ref{thm:safety}---a tradeoff Experiments~1--2
show buys elimination of an $18.8$--$25.5\%$ structural leak rate at a
$13$--$14$ percentage-point reuse cost.
Storage overhead is modest: an AMU adds $k$ tuples of table/column names,
the filter string, and a 12-character hash---roughly $4$--$8\times$ a bare
value entry (200--600 bytes for a typical 2--4 table join), smaller than the
SQL query most systems already persist alongside the result.

\section{Threat Model}
\label{sec:threat}

We model principals as enterprise departments operating AI agents that are
\emph{honest-but-potentially-over-privileged}: they request metrics they
have a legitimate business need for, but not all derivation paths fall
within their column permissions.
Two threats motivated this work in the first place, and they are the ones
the mechanism directly addresses: T1, where an agent retrieves a cached
result derived from a sensitive column it should not see (mitigated by the
$\mathrm{S}(a)\subseteq\mathrm{P}(d)$ gate), and T2, where two departments
store conflicting definitions of the same-named KPI (surfaced by the
definition-hash conflict detector).
The most important gap we do \emph{not} address is T4, inference attacks:
a derived feature that \emph{encodes} sensitive information without naming
the source column (say, a risk score computed from \texttt{income}) will
slip past the gate unless the dependency is explicitly recorded in the
lineage graph.
Table~\ref{tab:threats} summarises all eight threats; full prose
descriptions of each are in the Supplementary Material.

\begin{table*}[t]
\centering
\caption{Threat summary. ``Addressed'' means Algorithms~\ref{alg:retrieve}--\ref{alg:write}
mitigate the threat directly under the stated assumptions; ``Partial'' means
the mechanism reduces but does not eliminate exposure; ``No'' means an
external control is required.}
\label{tab:threats}
\footnotesize
\begin{tabular}{@{}p{0.07\linewidth}p{0.44\linewidth}p{0.1\linewidth}p{0.29\linewidth}@{}}
\toprule
\textbf{ID} & \textbf{Threat} & \textbf{Status} & \textbf{Production control} \\
\midrule
T1 & Sensitive-column leakage via cached result & Addressed & Lineage gate (Alg.~\ref{alg:retrieve}) \\
T2 & Silent metric-definition drift & Addressed & Conflict detector (Alg.~\ref{alg:write}) \\
T3 & Stale sensitive data after policy update & Partial & TTL + lazy $\mathrm{S}(a)$ recompute (Cor.~\ref{cor:policy-update}) \\
T4 & Inference attacks (undeclared derived features) & No & Manual review of feature engineering; out of scope \\
T5 & Malicious/incomplete lineage fabrication & Partial & Automatic extraction (\textsc{SQLGlot}/LINEAGEX), \S\ref{sec:degradation} \\
T6 & Aggregate inference across retrievals & No & Differential privacy layer (future work) \\
T7 & Insider threats on the store & No & Trusted infrastructure, audit logging \\
T8 & Prompt injection / timing side-channel on lineage & Partial & Automatic extraction; constant-time gate (optional) \\
\bottomrule
\end{tabular}
\end{table*}

The threats the mechanism addresses (T1, T2) are exactly the two failure
modes that motivated this work. The partially addressed threats (T3, T5,
T8) share a common root cause: they all come down to relying on lineage
completeness and integrity, which is why automatic, tamper-resistant
lineage extraction helps with all three. The unaddressed threats (T4, T6,
T7) mark where a memory-layer control simply runs out of reach -- they
need cooperation from the feature-engineering layer, a
differential-privacy layer, or the infrastructure trust boundary that AMU
does not control.
In short, AMU is one layer in a defence-in-depth stack, and we do not
claim it is a complete security solution by itself.
Two further attack vectors fall outside T1--T8 and are worth naming
explicitly rather than leaving implicit. \emph{Cross-metric correlation}:
a requester who is individually denied column $c$ may still triangulate
it by combining several \emph{permitted} AMUs whose values are jointly
informative about $c$, since the gate reasons about one retrieval at a
time, not about what a sequence of safe retrievals reveals in
combination -- a multi-request generalisation of T6. \emph{Departmental
collusion}: two departments each authorised for a disjoint subset of a
sensitive derivation could share their individually safe cached results
out-of-band to reconstruct the full picture, which no memory-layer
control can prevent once results leave the store. Both are, like T6 and
T7, beyond what a single-retrieval column gate can address by
construction, and we flag them as motivation for pairing AMU with
usage-auditing and differential-privacy controls at the organisational
level.

\section{Experiments}
\label{sec:experiments}

We built a synthetic discrete-event simulator in pure Python that isolates
the retrieval-and-conflict-detection mechanism from LLM inference, network
latency, and query-optimiser confounds. Because every event's ground-truth
lineage, sensitivity, and department permissions are known exactly, we can
measure leak and reuse rates without label noise.
We compare three systems: \textbf{No Memory} (leak-free, zero reuse),
\textbf{Naive Shared Memory} (content-gated, representing prior systems),
and \textbf{Lineage-Aware} (our proposal).
All experiments report mean $\pm$ population standard deviation over 30
seeds, with bootstrap 95\% confidence intervals and Mann-Whitney $U$
significance tests; the full protocol and raw per-seed tables are in the
Supplementary Material.

\subsection{Experiments 1--2: Synthetic and TPC-H Schemas}
\label{sec:exp1}
\label{sec:exp2}

Experiment~1 uses a 5-table synthetic schema (three departments, sensitive
columns $\mathcal{S}=\{\texttt{ssn},\texttt{email},\texttt{income}\}$, Table~\ref{tab:perms});
Experiment~2 replicates it on the 8-table TPC-H decision-support
benchmark~\cite{tpch2021} ($|\mathcal{S}_{\text{TPC-H}}|=9$) to test whether
findings generalise to a denser, more realistic schema.

\begin{table}[t]
\centering
\caption{Department permissions, synthetic schema. The gate enforces
restrictions only on columns in $\mathcal{S}$; non-sensitive blocked columns
represent full policy intent but are not currently gate-enforced.}
\label{tab:perms}
\footnotesize
\begin{tabular}{@{}p{0.16\linewidth}p{0.34\linewidth}p{0.38\linewidth}@{}}
\toprule
\textbf{Dept.} & \textbf{Permitted (summary)} & \textbf{Blocked (in $\mathcal{S}$)} \\
\midrule
Finance   & All columns            & None \\
Marketing & Commercial, order data & \texttt{ssn, income, email} \\
Support   & Customer, ticket meta  & \texttt{ssn, income, email} \\
\bottomrule
\end{tabular}
\end{table}

\begin{table}[t]
\centering
\caption{Leak and reuse rates, both schemas (30 seeds, mean $\pm$ std dev).
All naive--LA differences: bootstrap 95\% CI excludes zero, Mann-Whitney
$U$, $p<0.001$. Full tables in the Supplementary Material.}
\label{tab:combined-results}
\footnotesize
\begin{tabular}{@{}p{0.24\linewidth}p{0.15\linewidth}p{0.15\linewidth}p{0.15\linewidth}p{0.15\linewidth}@{}}
\toprule
& \multicolumn{2}{c}{\textbf{Synthetic (5-tab.)}} & \multicolumn{2}{c}{\textbf{TPC-H (8-tab.)}} \\
\textbf{Metric} & \textbf{Naive} & \textbf{LA} & \textbf{Naive} & \textbf{LA} \\
\midrule
Leak rate (\%) & $18.8\pm3.7$ & $0.0\pm0.0^\dagger$ & $25.5\pm4.5$ & $0.0\pm0.0^\dagger$ \\
Reuse rate (\%) & $95.8\pm0.0$ & $82.6\pm2.3$ & $95.8\pm0.0$ & $81.5\pm4.8$ \\
Conflict recall (\%) & $0.0\pm0.0$ & $100.0\pm0.0$ & $0.0\pm0.0$ & $100.0\pm0.0$ \\
\bottomrule
\multicolumn{5}{l}{\footnotesize$^\dagger$Formal guarantee (Theorem~\ref{thm:safety}), not an empirical discovery.}
\end{tabular}
\end{table}

\begin{figure}[t]
\centering
\includegraphics[width=\linewidth]{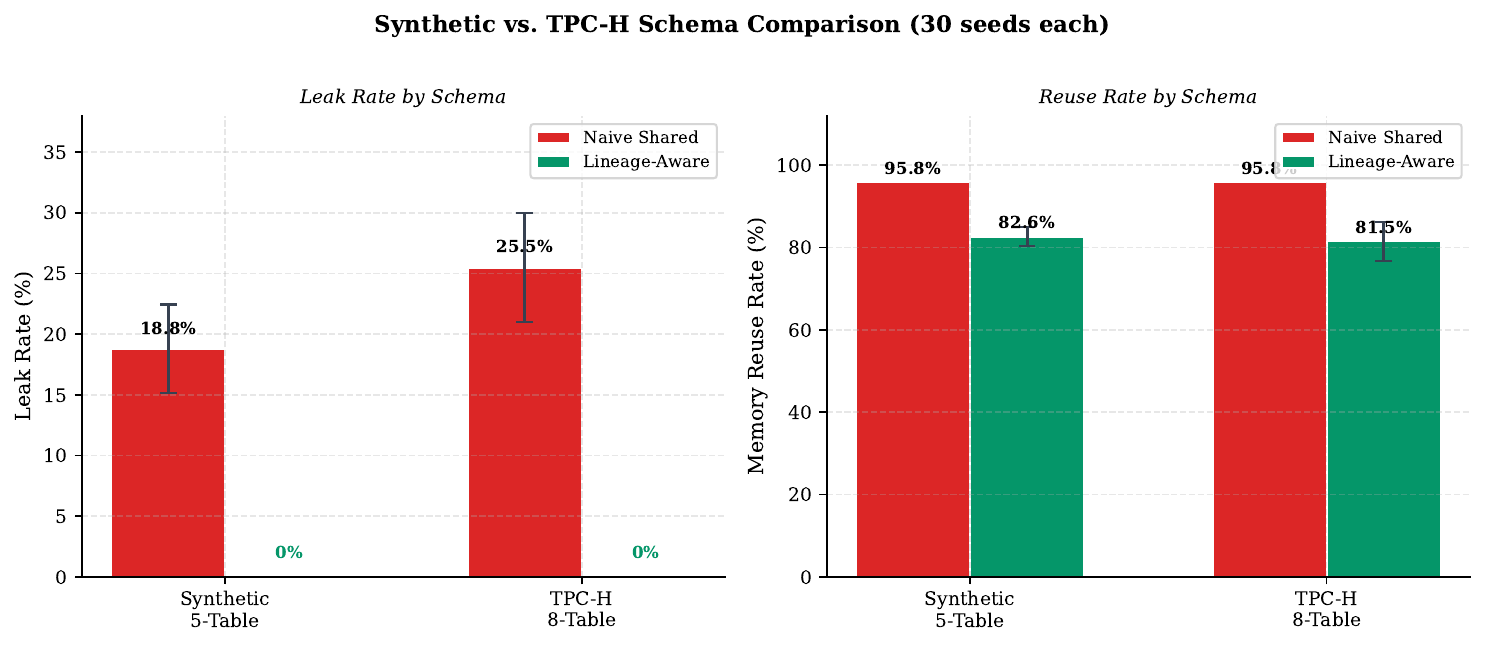}
\caption{Synthetic vs.\ TPC-H schema comparison (30 seeds each). TPC-H's
denser sensitive-column structure yields a higher naive leak rate; lineage-aware
gating eliminates leakage in both configurations.}
\label{fig:tpch}
\end{figure}

To be clear about what these numbers mean: the lineage-aware system's
$0\%$ leak rate is a \emph{policy guarantee} from
Theorem~\ref{thm:safety}, not something we discovered empirically. The
actual empirical finding is that the naive baseline's leak rate is
structurally non-zero (effect size $d \approx 7.2$--$8.0$, well past the
conventional ``large'' threshold of $0.8$), which reflects real retrievals
of sensitive-derivation results by departments that could not have run
the underlying queries themselves.
That protection is not free: the governance cost is a $13$--$14$
percentage-point reuse reduction, i.e., roughly 13 additional query
executions per 100 requests. That cost shows up as additional compute
rather than governance latency, since the gate-check predicate itself
costs only $0.09\,\mu$s (\S\ref{sec:complexity}).
This tradeoff held up consistently across two substantially different
schemas (60 total seed runs), which suggests content-gated leakage is a
structural property of the failure mode -- any workload that mixes
sensitive and non-sensitive derivation paths under shared metric names
will show it, rather than this being an artefact of how we built the
workload.
The two schemas differ in sensitive-registry size ($|\mathcal{S}|{=}3$
vs.\ $9$) and join density. If we treat this as a cross-experiment
ablation rather than two independent data points, holding the
sensitive-variant proportion constant (2 of 5 metrics in both schemas)
while tripling $|\mathcal{S}|$ and increasing join density lines up with
the leak-rate increase from $18.8\%$ to $25.5\%$. That points to leak rate
scaling with \emph{how many ways} a query can incidentally touch a
sensitive column, rather than with how many metrics are nominally
sensitive -- though this is consistent with a monotonic relationship, not
proof of one. A dedicated $2{\times}2{\times}2$ factorial sweep over step
count, registry size, and sensitive-variant proportion (specified in the
Supplementary Material) would settle the question directly, and we list
it as the top empirical priority for follow-on work.

\subsection{Experiment 3: Lineage-Completeness Degradation}
\label{sec:degradation}

Assumption~\ref{ass:complete} is the load-bearing condition of
Theorem~\ref{thm:safety}; in practice agents may under-report lineage due to
bugs or opaque query plans.
We define completeness $\rho \in [0,1]$ as the fraction of each step's
columns correctly reported, and measure the \emph{true} leak rate via a
shadow store that records full ground-truth sensitivity alongside the
truncated lineage the system actually sees.

\begin{figure}[t]
\centering
\includegraphics[width=\linewidth]{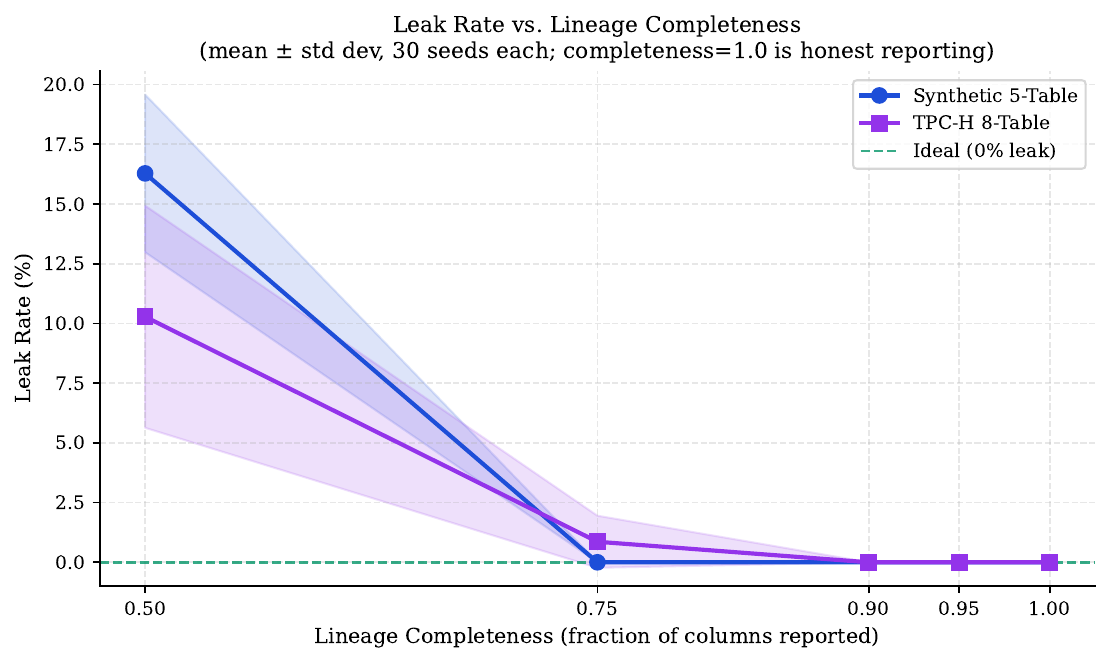}
\caption{Leak rate vs.\ lineage completeness $\rho$ (30 seeds, mean
$\pm$1\,std dev shading). Both schemas converge to 0\% by $\rho{=}0.75$--$0.90$;
at $\rho{=}0.50$ the synthetic schema approaches the naive baseline.}
\label{fig:degradation}
\end{figure}

At $\rho{=}0.50$, truncated lineage gives almost no protection (synthetic:
$16.3\pm3.3\%$ leak; TPC-H: $10.3\pm4.7\%$). At $\rho\geq0.75$ (synthetic)
or $\rho\geq0.90$ (TPC-H), the system reaches $0\%$ measured leak rate and
converges on the formal guarantee.
TPC-H needs higher completeness because its sensitive columns are spread
across more join steps -- join depth acts as a sensitivity multiplier on
the completeness requirement, not just on raw leak rate.
Our recommendation is $\rho \geq 0.9$ as a minimum target, and we would
lean toward automatic lineage extraction (Experiment~6) over agent
self-reporting, since self-reporting is hard to audit and degrades
unpredictably under exactly the conditions -- complex multi-hop joins --
where it matters most.

\subsection{Experiment 4: Extended Fuzzy Conflict-Detection Study}
\label{sec:fuzzy}

The exact-hash detector (Algorithm~\ref{alg:write}) flags any definition-hash
mismatch as a conflict.
We extended an original 10-pair evaluation to 43 hand-labelled lineage pairs
across four categories---15 True Conflicts, 15 Threshold Variants, 8
Logic-Operator changes (AND$\to$OR), and 5 Column-Subset cases---to test
three detectors: D1 (exact hash), D2 (Jaccard filter, $\tau{=}0.70$), and D3
(column-graph).

\begin{table}[t]
\centering
\caption{Extended fuzzy conflict-detection results (43 pairs). Full
per-category breakdown in the Supplementary Material.}
\label{tab:fuzzy}
\footnotesize
\begin{tabular}{@{}lcccc@{}}
\toprule
\textbf{Detector} & \textbf{Prec.} & \textbf{Recall} & \textbf{F\textsubscript{1}} & \textbf{LO correct} \\
\midrule
D1 --- Exact hash          & 0.651 & 1.000 & 0.789 & 8/8 \\
D2 --- Jaccard ($\tau{=}0.70$) & 0.676 & 0.821 & 0.742 & 3/8 \\
D3 --- Column-graph        & \textbf{1.000} & 0.714 & 0.833 & 0/8 \\
\bottomrule
\end{tabular}
\end{table}

What we found is a blind spot nobody had reported before: D3, which
achieved perfect $F_1{=}1.00$ on the original 10-pair set, misses all 8
Logic-Operator pairs once the dataset grows ($F_1$ drops to $0.833$ on 43
pairs), because it ignores filter-string variation entirely. That is the
right call for threshold recalibrations, but it leaves D3 blind to
AND$\to$OR changes that alter metric semantics without touching the
column set.
D1 catches all 8 LO pairs but pays for it with weak precision ($0.651$,
false-alerting on every threshold recalibration).
Neither detector is production-ready on its own; we propose, though we
have not yet evaluated, a composite D3+filter-logic-structure detector D4,
specified in the Supplementary Material.
The broader lesson here is that fuzzy-detection evaluations on small,
narrow datasets can overstate performance simply by missing edge cases.

\begin{figure}[t]
\centering
\includegraphics[width=\linewidth]{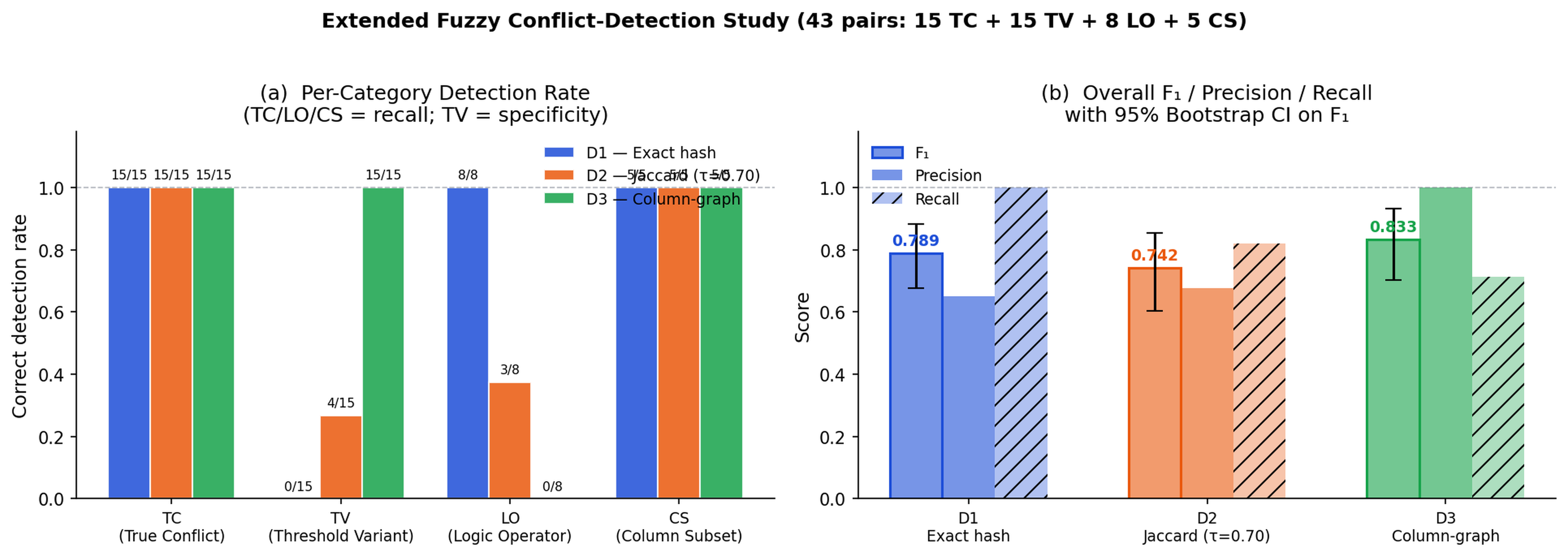}
\caption{Extended fuzzy conflict-detection results. \textbf{(a)}
Per-category detection rate: recall for TC/LO/CS, specificity for TV. D3's
perfect specificity on TV collapses to zero detection on LO. \textbf{(b)}
Overall $F_1$/precision/recall with 95\% bootstrap CI on $F_1$.}
\label{fig:fuzzy_ext}
\end{figure}

\subsection{Experiment 5: Empirical Runtime Analysis}
\label{sec:runtime}

We instrument all core operations with \texttt{time.perf\_counter()} over
2,000 iterations at five store sizes ($n\in\{1,5,10,20,50\}$), single-threaded
in a standard containerised Python environment with no specialised
acceleration.
\textbf{Hardware and environment:} all measurements were collected on a
4-core Apple Silicon (ARM64) virtual machine with 3.8\,GB RAM, running
Ubuntu~22.04\,LTS (Linux kernel 6.8) and CPython~3.10.12, inside a Docker
container with no GPU, SIMD intrinsics, or JIT acceleration enabled; no other
CPU-bound process ran concurrently during measurement.
We report this configuration in full because the absolute microsecond
values are platform-dependent. The safety-overhead argument in
\S\ref{sec:complexity} only relies on the $O(n)$-vs-$O(1)$ \emph{scaling}
behaviour, which is a property of the algorithm rather than of this
particular machine -- but we have not independently re-measured on a
second hardware platform yet, so we list cross-platform confirmation as a
reproducibility item (\S\ref{sec:futurework}).
Table~\ref{tab:complexity} already reports the key numbers: retrieve
worst-case scales linearly from $0.66\,\mu$s ($n{=}1$) to $13.82\,\mu$s
($n{=}50$), a $21\times$ increase confirming $O(n)$; best-case retrieve and
early-exit conflict detection stay flat, confirming $O(1)$.
Even the worst-case overhead is 2--3 orders of magnitude below typical
production analytics-query latency (single-digit ms to seconds), confirming
the mechanism's practical deployability.
The complete runtime sweep across all five store sizes, including
per-iteration variance, is tabulated in the Supplementary Material.

\subsection{Experiment 6: Real-Agent Integration}
\label{sec:exp6}

Experiments~1--5 assume complete lineage (Assumption~\ref{ass:complete}).
This experiment tests one practical route toward satisfying that
assumption: we replace agent self-declared lineage with \emph{automatic}
extraction via
\textsc{sqlglot}~\cite{sqlglot2023}, a pure-Python SQL parser, interposed
between the agent and the database cursor---every table and column in the
executed query's abstract syntax tree (AST) is extracted deterministically, so the agent cannot
under-report because it never declares anything.
We use the Northwind database~\cite{northwind2000} with four departments
(Finance, Sales, Operations, HR) and sensitive registry
$\{\texttt{unitprice},\texttt{freight},\texttt{homephone},\texttt{birthdate}\}$,
driven by a LangChain~\cite{langchain2023}+Ollama~\cite{ollama2024} agent
issuing real (and demo-mode canned) SQL across three scenarios covering the
gate's three outcomes: safe reuse, blocked-with-conflict, and vacuous reuse
on a metric with no sensitive columns.

Across all 9 cross-department round-trips, zero sensitive-column leaks
occur: 5 fresh SQL queries are executed and written (2 following a gate
block), 4 requests are served entirely from cache, and 2 gate blocks
precisely coincide with the 2 automatically detected metric-definition
conflicts---i.e., every block corresponded to a genuine semantic divergence,
not a false positive.
Cache-hit latency ($0.009\,$ms) comes in three to four orders of magnitude
below the SQL it replaces, which lines up with the isolated benchmarks of
\S\ref{sec:runtime}.
This is a nine-round-trip proof-of-concept, not a large-scale trial: it
shows automatic SQL lineage extraction can satisfy
Assumption~\ref{ass:complete} for the SQL query patterns exercised here,
with no change to agent logic and no LLM cooperation required. Whether it
extends to broader SQL workloads is still an open question, particularly given the
view-, stored-procedure-, and aliasing-related caveats we discuss in
\S\ref{sec:discussion}, and the larger-scale validation we list as future
work (\S\ref{sec:futurework}).
Full scenario descriptions and the round-trip-by-round-trip table are in
the Supplementary Material.

\section{Discussion and Conclusion}
\label{sec:discussion}

\subsection{Enterprise Value and What the Experiments Establish}

\S\ref{sec:experiments} points to three benefits for enterprise
deployments. On \emph{compliance posture}, Theorem~\ref{thm:safety} turns
a policy assertion into something mechanically checkable -- an auditor can
verify it by inspecting gate logic instead of trusting agent behaviour,
which helps with EU AI Act explainability obligations~\cite{euaiact2024}.
On \emph{avoided breach exposure}, eliminating an $18.8$--$25.5\%$
structural leak channel changes the risk profile materially given a
\$4.44M average breach cost~\cite{ibm2025breach}. And on \emph{bounded
efficiency cost}, the $13$--$14$ point reuse reduction is a one-time,
quantified governance tax rather than an open-ended latency regression,
since the gate itself costs sub-microsecond time.
Stepping back, the six experiments together show that content-gated
memory systematically leaks across departments (Experiments 1--2); that
the $0\%$ lineage-aware leak rate follows from the theorem, so the real
empirical content is the reuse-cost tradeoff; that the safety guarantee
degrades gracefully rather than catastrophically below full completeness
(Experiment~3); that production conflict detection needs a composite
detector (Experiment~4); that gating overhead is negligible
(Experiment~5); and that automatic SQL lineage extraction is a practical
mechanism for improving lineage completeness in a real agent stack, with
zero observed leaks across the 9 round-trips exercised (Experiment~6).

\subsection{Limitations and Future Work}

\textbf{Limitations:} Several of these are worth stating plainly. The
$18.8$--$25.5\%$ naive leak rates are properties of our workload mixes, not
universal base rates, though the structural argument holds regardless of
the specific percentage. The mechanism does not catch inference attacks
that encode sensitive information in a derived feature without naming the
source column (T4, the most important gap for production use). The
43-pair fuzzy dataset is still small. And the runtime benchmarks
(\S\ref{sec:runtime}) were measured on a single hardware/OS configuration,
so the reported constants -- though not the $O(n)$-vs-$O(1)$ scaling --
may shift on other platforms.

\textbf{View, stored-procedure, and alias limitations of automatic lineage
extraction.} Experiment 6's automatic-extraction guarantee -- closing
Assumption~\ref{ass:complete} without needing agent cooperation -- is only
as complete as \textsc{SQLGlot}'s static AST parse of the executed SQL
text, and that leaves three gaps worth flagging.
First, when a query selects from a database \emph{view} rather than a base
table, \textsc{SQLGlot} records the view name as the lineage source unless
it is also given the view's defining SQL to expand. If the view definition
does not get resolved (say it lives only in the database catalog and was
never passed to the extractor), a sensitive base column hidden behind an
innocuously-named view (\texttt{v\_customer\_summary}) will not show up in
$L$, which quietly violates Assumption~\ref{ass:complete}.
Second, stored procedures and user-defined functions invoked from SQL
(e.g., \texttt{CALL compute\_churn(...)}) are opaque to AST-level column
extraction -- the procedure body might touch sensitive columns internally
that never appear as identifiers in the calling statement, so capturing
lineage here means either inlining the procedure body before parsing or
maintaining a separate column-level lineage annotation for each registered
procedure.
Third, column and table \emph{aliasing} (\texttt{SELECT c.income AS c1},
or self-joins that alias the same physical table twice) is usually
resolved correctly by \textsc{SQLGlot}'s scope analysis, but deeply nested
aliasing across several levels of subquery or common table expression
(CTE) can defeat that resolution in practice, especially combined with
\texttt{SELECT~*} expansion against a catalog the extractor was not given.
In all three cases, the system's fail-closed design
(\S\ref{sec:implementation}) limits the damage to over-blocking -- an AMU
that should have been retrievable gets marked maximally sensitive --
rather than actual leakage. Still, this is a completeness gap in
Experiment 6's zero-leak result that a purely syntactic extractor cannot
close on its own; closing it needs catalog-aware view/procedure expansion,
which we flag as future work rather than something the current
implementation already does.

\textbf{Future work, in priority order:}\label{sec:futurework}
(1)~adversarial lineage testing---red-team
queries that smuggle sensitive joins through derived features or
view indirection, stress-testing Assumption~\ref{ass:complete}; (2)~non-SQL
lineage extraction for vector-store, REST/GraphQL, and in-process
derivations, and catalog-aware view/stored-procedure expansion for the SQL
case; (3)~a production deployment study on a real 20--50 table
warehouse over weeks rather than seeds, on more than one hardware platform;
and (4)~implementing and evaluating
the proposed D4 composite conflict detector against a larger real-analytics
corpus, alongside open-sourcing the reference implementation.

\subsection{Conclusion}

Every current AI agent memory system governs \emph{what} is stored and
\emph{who} may access it, but none of them track \emph{how} the stored
result was derived. Restricted to that scope -- named, logged columns
under reasonably complete lineage, not the T4 inference case discussed
above -- the Analytical Memory Unit schema and lineage-gated retrieval
policy close that gap at a modest overhead: $4$--$8\times$
storage, a sub-microsecond gate check, and at most $13.8\,\mu$s
worst-case retrieval latency -- backed by a formal safety guarantee
(Theorem~\ref{thm:safety}) and six experiments that characterise its
behaviour, its degradation under incomplete lineage, and its preliminary
feasibility on a small real-agent integration -- not yet a demonstration
of viability at production scale.
As AI agents move from single-purpose assistants toward being
interconnected participants in enterprise data workflows, we expect the
systems that govern what those agents remember to matter more and more
for what those agents end up being allowed to say, and to whom.
Content-and-role governance answers ``who owns this memory''; what we
think is the equally necessary question is one that derivation-gated
governance can answer instead: could this requester have produced this
result themselves?

\section{Reproducibility, Data, and Ethics}
\label{sec:repro}

Our reference implementation -- Python code covering the AMU/lineage data
model, three competing memory systems, an event-replay simulator,
statistical analysis, and the \textsc{sqlglot}-based real-agent pipeline
-- reproduces every table and figure in this paper, and we containerised
it via Docker Compose for one-command reproduction. Full
module-by-module documentation is in the Supplementary Material.
We did not use any proprietary or personally identifiable data: the
synthetic and TPC-H-inspired schemas use generated or benchmark data, and
the Northwind database (Experiment~6) is a decades-old public fictional
dataset with no real individuals' information in it.
This work does not involve human subjects or data requiring ethical
review. We were deliberate about making the threat model
(\S\ref{sec:threat}) explicit about what the mechanism does \emph{not}
protect against, since overstating a safety guarantee's scope is itself a
governance risk.
We will make source code, generated datasets, and raw per-seed results
available at a public repository upon acceptance; in the interim they are
available from the corresponding author on reasonable request.

\section*{Conflict of Interest}
The authors declare that they have no known competing financial interests
or personal relationships that could have appeared to influence the work
reported in this paper.

\section*{Acknowledgment}
The authors thank no additional contributors beyond those listed in the
author list.

\balance

\vspace{6pt}

\begin{IEEEbiographynophoto}{Venkata Sangaraju}
is an independent researcher working on memory and data-governance
architectures for enterprise artificial intelligence (AI) agents. His
research interests include lineage and provenance tracking for AI agent
systems, privacy-preserving retrieval, column-level access control, and the
application of formal safety guarantees to shared multi-agent memory
infrastructure. He is the corresponding author of this work
(ORCID: 0009-0001-7716-1342).
\end{IEEEbiographynophoto}

\begin{IEEEbiographynophoto}{Sudhir Vissa}
is with SAGE7 AI, Georgetown, Texas, USA. His research interests include
enterprise AI agent architectures, data governance, and applied
infrastructure for multi-agent deployments, including the
metric-definition and access-control problems addressed in this work
(ORCID: 0009-0003-7865-0863).
\end{IEEEbiographynophoto}
\EOD

\end{document}